\documentclass[11pt]{article}
\usepackage{setspace}
\usepackage[margin=2.5cm]{geometry} % Sets all margins to 2.5cm
\usepackage{natbib}
\usepackage{graphicx}
\usepackage{tikz}
\usetikzlibrary{positioning}
\usepackage{amsthm}
\usepackage{amsmath}
\usepackage{amssymb}
\usepackage{mathabx}
\usepackage{multicol}
\usepackage{url}
\usepackage{multirow}
\usepackage{algorithmic} 
\usepackage{algorithm}
\usepackage{authblk}

\usepackage[inline]{enumitem}

\usepackage{comment}
\usepackage{xcolor}
\newtheorem{theorem}{Theorem}
\newtheorem{lemma}{Lemma} 
\newtheorem{corollary}{Corollary} 

\newtheorem{definition}{Definition}

\theoremstyle{remark}
\begin{document}

\title{New Complexity Results for Fair Repetitive Scheduling}

\author[1]{Moran Koren\thanks{Supported by the Israel Science Foundation (ISF), grant No. 977/24.}}
\author[2]{Michael L. Pinedo\thanks{Supported by the United States-Israel Binational Science Foundation (BSF) grant No. 2024004.}}
\author[1]{Dvir Shabtay\thanks{Supported by the United States-Israel Binational Science Foundation (BSF) grant No. 2024004.}}

\affil[1]{\small Department of Industrial Engineering and Management, Ben-Gurion~University~of~the~Negev, Beer-Sheva, Israel, \texttt{korenmor@bgu.ac.il, dvirs@bgu.ac.il}}
\affil[2]{\small Stern School of Business, New York University, 40 West 4th Street, New York, NY, 10012, USA, \texttt{mlp5@stern.nyu.edu}}
\date{}

%\author{}
%\author{Moran Koren \and Michael L. Pinedo  \and Dvir %Shabtay} 
\maketitle

\begin{abstract}
We revisit the problem of finding fair solutions to repetitive scheduling problems with a single machine. In this problem, we are given a set of $n$ clients and a planning horizon consisting of $q$ periods (days). Each day, every client submits a single job that must be processed by the machine. The objective is to construct a set of $q$ schedules, one for each day, such that the quality of service (QoS) received by each client meets a predefined threshold. The QoS measure may be any standard scheduling criterion, such as the total waiting time or total completion time of a client's jobs over the entire planning horizon. This problem has been studied in the literature, with previous works providing complexity classifications and approximation algorithms for various QoS measures. Nevertheless, several important questions remain open. In this paper, we resolve three of these questions and identify several additional directions for future research.

\paragraph{Keywords:} Repetitive scheduling; NP-hard; single machine; quality of service; fairness. 
\end{abstract}

%\section*{Funding}
%Moran Koren gratefully acknowledges financial support from the Israel Science Foundation (ISF), grant No. 977/24. D.~Shabtay and Michael L. Pinedo  was supported by United States-Israel Binational Science Foundation grant 2024004.

\section{Introduction}
Scheduling is a fundamental decision-making process that arises in a wide range of domains, including manufacturing systems, transportation networks, healthcare services, cloud computing, and workforce management~(see, e.g.,~\cite{Pinedo2022}). The primary objective of scheduling is to allocate limited resources over time in a manner that optimizes system performance, improves resource utilization, and satisfies operational constraints. While efficiency-related criteria such as minimizing total completion time, maximizing throughput, or reducing operational costs have traditionally dominated scheduling research, the fairness of scheduling decisions has become more and more important in modern applications~(see, e.g., \cite{Ajtai,Baum, Agnetis,DBLP:journals/eor/AgnetisBNP25,Ala,DBLP:journals/corr/abs-2603-28800}).

Fair decision-making in scheduling aims to ensure that resources, opportunities, workloads, or service levels are distributed equitably among competing entities. A fair schedule not only improves stakeholder satisfaction and trust but also prevents systematic disadvantages that may arise when optimization focuses solely on efficiency. In many real-world settings, unfair scheduling decisions can lead to reduced cooperation, lower morale, resource monopolization, and long-term performance degradation (see, e.g.,~\cite{Sedighi19}).

The importance of fairness becomes even more pronounced in repetitive scheduling environments, where scheduling decisions are made repeatedly over multiple periods~(\cite{heeger2021equitable,hermelin2025fairness,heeger2025interval,Megow}). In such settings, small inequities in individual scheduling periods may accumulate over time, resulting in significant disparities among participants. Consequently, evaluating fairness solely within a single scheduling horizon is often insufficient. Instead, fairness must be considered across the entire sequence of scheduling decisions, ensuring that advantages and disadvantages are balanced over the long term.

In this paper, we study fairness in repetitive scheduling problems. We focus on the model introduced in~\cite{hermelin2025fairness}. Although that work established a broad set of results, several important questions remain open. We resolve some of these open questions and provide further insight into the complexity of fair repetitive scheduling.

The remainder of the paper is organized as follows. In Section~\ref{sec1}, we formally present the problem, and in Section~\ref{sec2} we survey the most relevant results from the literature and state the open questions we address. Our results appear in the following two sections. Finally, in the concluding section, we discuss our findings and outline directions for future research, including additional open questions that we were unable to resolve.

%More particulaWe use a setting that was first introduced by~\cite{heeger2021equitable} and then extended in~\cite{hermelin2025fairness}  

%For this reason, repetitive scheduling introduces unique challenges that require mechanisms capable of tracking historical allocations and incorporating them into future scheduling decisions. By balancing efficiency objectives with fairness considerations over time, repetitive scheduling frameworks can achieve sustainable and equitable outcomes while maintaining high system performance. As a result, fairness has emerged as a critical criterion in the design and evaluation of modern scheduling algorithms and decision-support systems.
\section{Problem Definition}
\label{sec1}
We study a class of single-machine repetitive scheduling problems that can be defined as follows. We are given a set of $n$ \emph{clients}, $[n]:=\{1,\ldots,n\}$. Each client is associated with a single job on each of $q$ different \emph{days}. Let $(i,j)$ refer to the job of client $j$ on day $i$. Let $\mathcal{J}_{d_i}=\{(i,j):j\in [n]\}$, $\mathcal{J}_{c_j}=\{(i,j):i\in [q]\}$ and $\mathcal{J}=\{(i,j): j\in [n], i\in [q]\}$ denote the set of jobs to be scheduled on day $i$, the set of jobs that belong to client $j$, and the set of all jobs, respectively. 

An instance of our problem consists of a set of job processing times, $\{p_{ij} : (i,j) \in \mathcal{J}\}$, where $p_{ij}$ denotes the \emph{processing time} of job $(i,j)$. In addition, an instance may include other job-specific parameters, such as the \emph{due date} $d_{ij}$ or \emph{release date} $r_{ij}$ of job $(i,j)\in \mathcal{J}$. Throughout this paper, we adopt the standard assumption in static scheduling that all jobs are available at time zero, i.e., $r_{ij}=0$ for all $(i,j) \in \mathcal{J}$.

On each day $i\in [q]$, the jobs in $\mathcal{J}_{d_i}$ have to be scheduled nonpreemptively on a single machine. A solution to our repetitive scheduling problem is therefore defined by a set of $q$ permutations (schedules) $\pi = (\pi_1, \pi_2, \ldots, \pi_q)$, where $\pi_i=(\pi_i(1),\pi_i(2),\ldots,\pi_i(n))$ and $\pi_i(j)$ represents the index of the job that is scheduled on the $j$'th position on day $i$, i.e., $\pi_3(2)=5$ means that the fifth job is scheduled on the second position on day three. Given $\pi_i$, the completion time of job $(i,\pi_i(j))$ is simply the total processing time of all jobs processed on day $i$ before this job plus the processing time of the job itself. Therefore, it can be computed by $C_{i,\pi_{i} (j)} = \sum_{\ell=1}^j p_{i,\pi_{i} (\ell)}.$
%A \emph{schedule} is a permutation $\sigma: [n] \to [n]$ of the jobs.
%Given a schedule~$\sigma$, the \emph{completion time}~$C_j (\sigma)$ of job~$j$ is $C_j (\sigma) := \sum_{i \in [n] : \sigma (i) \le \sigma (j)} p_i$; that is, it is the total processing times of all jobs preceding $j$ in the schedule (including $j$ itself).
%\begin{equation}
%\label{completion}
%   C_{i,\pi_{i} (j)} = \sum_{\ell=1}^j p_{i,\pi_{i} (\ell)}.
%\end{equation}

Let $f_{ij}(C_{ij})$ be the performance measure of job $(i,j)$ indicating the QoS client $j$ receives on day~$i$. We consider four classical $f_{ij}$ functions. The first, $f_{ij}(C_{ij})=C_{ij}$, is simply the completion time of job $(i,j)$. The second, $f_{ij}(C_{ij})=W_{ij}=C_{ij}-p_{ij}$, is the waiting time of job $(i,j)$. The third, $f_{ij}(C_{ij})=L_{ij}=C_{ij}-d_{ij}$, is the lateness of job $(i,j)$. The last, $f_{ij}(C_{ij})=T_{ij}=\max\{0,L_{ij}\}$, is the tardiness of job $(i,j)$. We define the QoS received by client $j$ as $F_j=\sum_{i\in [q]} f_{ij}(C_{ij})=\sum_{(i,j)\in \mathcal{J}_{c_j}} f_{ij}(C_{ij})$, where smaller values of $F_j$ imply higher QoS.

%For $j \in [n]$, we use $F_j= \sum^q_{i=1} f_{ij}(C_{ij})$ to denote the total performance measure (or QoS) of client $j$, with smaller values indicating higher QoS. 
Taking a global perspective, one may consider the problem of finding the most \emph{efficient} solution, namely, a solution $\pi$ that minimizes $\sum_{j \in [n]} F_j = \sum_{(i,j)\in \mathcal{J}} f_{ij}(C_{ij})$. We denote this problem by $1|rep|\sum_{(i,j)\in \mathcal{J}} f_{ij}(C_{ij})$, following the classical three-field notation of~\cite{Graham}. %this problem is denoted by $1|rep|\sum_{(i,j)\in \mathcal{J}}{f_{ij}(C_{ij})}$).
%Consider first the single machine problem which aims to compute the most \emph{efficient} solution $\pi$ that minimizes $\sum_{j \in [n]} F_j=\sum_{(i,j)\in \mathcal{J}}{f_{ij}(C_{ij})}$. Following the classical three-field notation of~\cite{Graham}, this problem is denoted by $1|rep|\sum_{(i,j)\in \mathcal{J}}{f_{ij}(C_{ij})}$. Finding the most efficient solution reduces to $q$ independent single-machine problems, one for each day $i\in[q]$ (\cite{hermelin2025fairness}). Therefore, and based on classical results for the single day counterpart, the $1|rep|\sum_{(i,j)\in \mathcal{J}}{f_{ij}(C_{ij})}$ is solvable in $O(qn\log n)$ time for $f_{ij}(C_{ij})\in \{C_{ij},W_{ij},L_{ij}\}$ by ordering jobs on each day $i\in[q]$ according to the classical shortest processing time (SPT) rule (\cite{Smith1956}). For $f_{ij}(C_{ij})=T_{ij}$ the single day problem is ordinary NP-hard~(\cite{Du1990}), and solvable in pseudo polynomial $O(n^4 P)$ time~(\cite{Lawler77}), where $P$ is the total processing time of all jobs. Consequently, finding an efficient solution to $1|rep|\sum_{(i,j)\in \mathcal{J}}T_{ij}$ can be done in $O(qn^4 \max_i{P}_i)$ time, where $P_i=\sum_{(i,j) \in \mathcal{J}_{d_i}} p_{ij}$ denotes the total processing time of all jobs to be processed on day $i$. 
Nevertheless, optimizing overall efficiency may lead to highly unbalanced outcomes, where some clients experience substantially poorer QoS than others. To promote fairness among clients, we adopt the classical min--max criterion. Specifically, our goal is to compute a solution $\pi$ that minimizes the maximum $F_j$ value incurred by any client. We refer to this problem as the \emph{maximal fairness} problem, which, based on the classical three-field notation, is denoted by $1|rep|\max_j \sum_i f_{ij}(C_{ij})$.
%In some cases, we consider the decision version of the problem, which asks whether there exists a $K$-fair solution that satisfies $\max_j F_j \leq K$. We denote this problem by $1|rep, \max_j \sum_i f_{ij}(C_{ij})\leq K|-$

\section{Related Work and Research Questions}
\label{sec2}
\cite{hermelin2025fairness} observed that solving the $1|rep|\sum_{(i,j)\in \mathcal{J}}{f_{ij}(C_{ij})}$ problem (i.e., finding the most efficient solution) reduces to $q$ classical and independent $1||\sum_j f_{j}(C_{j})$ problems, one for each day $i\in[q]$. Therefore, and based on classical results for the single day counterpart (see,~\cite{Smith1956}), the $1|rep|\sum_{(i,j)\in \mathcal{J}}{f_{ij}(C_{ij})}$ is solvable in $O(qn\log n)$ time for $f_{ij}(C_{ij})\in \{C_{ij},W_{ij},L_{ij}\}$ by ordering jobs on each day $i\in[q]$ according to the classical shortest processing time (SPT) rule (\cite{Smith1956}). For $f_{ij}(C_{ij})=T_{ij}$ the single day problem is ordinary NP-hard~(\cite{Du1990}), and solvable in pseudo polynomial $O(n^4 P)$ time~(\cite{Lawler77}), where $P$ is the total processing time of all jobs. Consequently, finding an efficient solution to $1|rep|\sum_{(i,j)\in \mathcal{J}}T_{ij}$ can be done in $O(qn^4 \max_i{P}_i)$ time, where $P_i=\sum_{(i,j) \in \mathcal{J}_{d_i}} p_{ij}$ denotes the total processing time of all jobs to be processed on day $i$. 

Consider now the maximal fairness problem, $1|rep|\max_j \sum_{i}f_{ij}(C_{ij})$. When $q=1$, the problem reduces to the classical (single day) min-max $1||\max_j f_{j}(C_{j})$ problem which is easy for any regular performance measure (being a non-decreasing function of $C_{ij}$), see~\cite{Lawler73}. However, in special cases, the problem can be solved using a faster algorithm. In particular, when $f_{j}(C_{j})=C_{j}$ any permutation is optimal. When $f_{j}(C_{j})=W_{j}$ any permutation with the last scheduled job having the largest processing time is optimal. When $f_{j}(C_{j})=T_{ij}$ ordering the jobs in a non-decreasing order of due dates (i.e., according to the EDD rule) is optimal, see~\cite{Jackson1955}. 

\cite{hermelin2025fairness} studied the $1|rep|\max_j \sum_{i} f_{ij}(C_{ij})$ problem for the case of $q \ge 2$ days. They considered the three performance functions $f_{ij}(C_{ij}) \in \{C_{ij}, W_{ij}, L_{ij}\}$ and proved that the problem is strongly NP-hard when $q$ is part of the input, regardless of the chosen performance function. They therefore turned their attention to the computational complexity of the problem when the number of days is fixed.

%They showed that the problem is strongly NP-hard when $q$ is arbitrary for any of the three different $f_{ij}(C_{ij})$ function. However, when $q$ 
\cite{hermelin2025fairness} proved that the special case $1|rep,q=2|\max_j \sum_i C_{ij}$, corresponding to a planning horizon of two days, can be solved in $O(n\log n)$ time. Interestingly, their algorithm is similar to the classical Johnson's algorithm for scheduling in a two-machine flow shop setting~(\cite{johnson1954optimal}), where the two days play the role of the two machines. The main difference is that the algorithm for $1|rep,q=2|\max_j \sum_i C_{ij}$ schedules the jobs on the first day using Johnson's sequencing rule, and uses the reverse order for the second day, whereas in the two-machine flow shop problem the job sequence is identical on both machines. In contrast to this positive result,~\cite{hermelin2025fairness} proved that the $1|rep|\max_j \sum_i C_{ij}$ problem is weakly NP hard for any fixed number of days which is at least four. The main gap that remains open with respect to the problem is the case of three days. This leads to our first research question.
\begin{itemize}
    \item Question 1: Can we solve the $1|rep|\max_j \sum_i C_{ij}$ problem in polynomial time when $q=3$?
\end{itemize}

\cite{hermelin2025fairness} also studied the $1|rep|\max_j \sum_i W_{ij}$ problem and showed that it is weakly NP-hard for any fixed $q \ge 4$. However, its complexity remains open for $q\in\{2,3\}$, which motivates our second research question.
\begin{itemize}
    \item Question 2: Can we solve the $1|rep|\max_j \sum_i W_{ij}$ problem in polynomial time for $q=3$? If not, can we do so for $q=2$?
\end{itemize}
We note here that the complexity landscape of the $1|rep|\max_j \sum_{i} L_{ij}$ problem with respect to the number of days is much clearer, as \cite{hermelin2025fairness} proved that this problem is weakly NP-hard already when $q=2$. Nevertheless, the proof is built upon a reduction where some of the scheduled jobs have negative lateness, and therefore the following question is still open.
\begin{itemize}
    \item Question 3: Can we solve the $1|rep|\max_j \sum_i T_{ij}$ problem in polynomial time for $q\in \{2,3\}$?
\end{itemize}

Extensions of the model studied in~\cite{hermelin2025fairness}, focusing on the design of heuristic and approximation algorithms, can be found in~\cite{shab,segev,Megow}.

\section{Hardness of $1|rep|\max_i\sum_j{C_{ij}}$ on three days}
\label{hardness}
In this section, we prove that $1|rep|\max_i\sum_j{C_{ij}}$ is NP-hard when $q=3$. The proof is based on polynomial-time reduction from the classical NP-hard Partition problem.
\begin{definition}
\label{def:par}
\textsc{Partition}: Given a set of positive integers $\mathcal{A}=\{a_1,...,a_{h}\}$ ($\sum_{a_i \in \mathcal{A}} a_i = 2B$), determine whether there exists a partition of $\mathcal{A}$ into two subsets $\mathcal{A}_1$ and $\mathcal{A}_2$ such that $\sum_{a_i \in \mathcal{A}_1} a_i=\sum_{a_i \in \mathcal{A}_2} a_i=B$.
\end{definition}

%\subsection{Hardness of $1|rep|\max_i\sum_j{C_{ij}}$ on three days}

\begin{theorem}%[\appref{thm:fewdays}]
\label{thm:fewdaysNPh}
The $1|rep,q=3|\max_i\sum_j{C_{ij}}$ problem is weakly NP-hard. 
\end{theorem}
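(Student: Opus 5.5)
We will reduce from \textsc{Partition} (Definition~\ref{def:par}) to the decision version of $1|rep,q=3|\max_j\sum_i C_{ij}$: given $K$, is there $\pi=(\pi_1,\pi_2,\pi_3)$ with $F_j\le K$ for every client $j$? Membership of the decision version in NP is immediate, so the whole task is a pseudo-polynomial reduction.

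\textbf{Construction.} For each $a_\ell\in\mathcal{A}$ we create an \emph{element client} $\ell$ with processing times chosen so that its three jobs have sizes governed by $a_\ell$ (for example $p_{1\ell}=a_\ell$, $p_{2\ell}=a_\ell$, $p_{3\ell}=\varepsilon$-scaled small values, all scaled by a large factor $M$ relative to a common base). We also add a small number of \emph{special clients}, say two large blocker clients $x,y$, with huge processing times $L$ on certain days and tiny ones on others. The intended solution looks as follows. On day~1 the element clients in $\mathcal{A}_1$ precede $x$ and those in $\mathcal{A}_2$ follow it. On day~2 the roles are swapped. On day~3 a blocker is placed to balance the blockers' own totals. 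Because every client's three completion times must sum to at most $K$, the blockers are forced to sit at a time close to $B$ on the appropriate days. The threshold $K$ will be the exact total incurred by $x$ and $y$ when the element jobs scheduled before them total exactly $B$ on each day.

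\textbf{Key steps, in order.}
\begin{enumerate*}[label=(\roman*)]
\item Choose magnitudes $L\gg M\cdot 2B\gg$ everything else, so that the three blocker jobs with size $L$ dominate any sum that contains them. Then show that in any feasible solution each element client can afford to be behind a size-$L$ job on at most one (or a prescribed number) of the days, which fixes the coarse structure of the schedule.
\item From the blockers' constraints $F_x\le K$ and $F_y\le K$, derive inequalities of the form $\sum_{\ell\in S_1}a_\ell\le B$ and $\sum_{\ell\in S_2}a_\ell\le B$, where $S_1$ and $S_2$ are the sets of element clients preceding the blockers on the relevant days.
\item From the element clients' constraints, derive the complementary inequalities, which force $S_1\cup S_2=\mathcal{A}$, possibly with $S_2$ being the complement of $S_1$ due to the ``at most once behind $L$'' counting. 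Together these give $\sum_{S_1}a_\ell=B$.
\item For the converse direction, given a partition, exhibit the schedules explicitly and verify all $F_j\le K$. Within each group, order the element clients in a Johnson-like order so that their small residual terms stay within slack.
\end{enumerate*}

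\textbf{Main obstacle.} With only three days, the counting argument in step~(iii) is delicate. Each element client's bound must simultaneously permit being behind a blocker once and forbid it twice, while the internal order of the element jobs (terms of order $M\cdot 2B$) must not break the inequalities. The two-day case is polynomial, so the third day must carry the essential interaction. I would first attempt to tune the clients' day-3 jobs so that every element client's slack is exactly $L+\Theta(MB)$, and then absorb the within-group ordering effects with a secondary scale (terms of order $a_\ell$ versus $M a_\ell$). If this fails, I would use a separate dummy client per day to pin the day-3 structure.
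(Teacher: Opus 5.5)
Your high-level skeleton matches the paper's: a reduction from \textsc{Partition}, two blocker clients, element clients split around a blocker on each of two days, and a third day used to balance the blockers. But the proposal never produces an instance. The element clients' third-day jobs, the blockers' processing times and the threshold $K$ are all left unspecified, steps (i)--(iii) are stated as goals, and your ``main obstacle'' paragraph concedes that you do not yet know how to achieve them.

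The missing idea is how to make \emph{both} blocker constraints tight at the same $K$, so that each yields $\sum a_\ell\le B$, while every element client keeps slack. An element client behind a blocker pays everything the blocker pays on that day, plus its own job and its within-group prefix. So for the blocker to be tight while that element client stays feasible, the blocker must be charged extra completion time on some other day on which the element clients are charged less. Your sketch does not provide this, and read literally it fails in two places:
\begin{itemize}
\item Suppose both blockers carry the same big size $L$ and are balanced against each other on the third day. Whichever is processed second there finishes later by the other's processing time, so they cannot both be tight at a common $K$. Equal third-day completion times would require the second blocker's third-day job to be empty, and then that job simply moves to the front and its constraint goes slack.
\item ``Three blocker jobs of size $L$'' shared by two blockers gives one blocker a total of at least $2L$. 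This forces $K\ge 2L$, which destroys the magnitude argument of step~(i) that an element client can afford to be behind an $L$-job at most once.
\end{itemize}

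The paper resolves this with a small asymmetric gadget and no magnitude separation. Element clients have $p_{1j}=p_{3j}=a_j$ and $p_{2j}=0$. The blockers are $x$ with processing times $(4B,3B,0)$ and $y$ with $(0,2B,2B)$, and $K=8B$. Day~2 is the interaction day: only $x$ and $y$ have positive processing times there, and $x$ must precede $y$, since otherwise $F_x\ge 4B+5B>K$.

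Let $P$ and $Q$ be the element clients before and after $x$ on day~1, and $R$ and $V$ those before and after $y$ on day~3. Then $F_x\ge 7B+\sum_{j\in P}a_j$ and $F_y\ge 7B+\sum_{j\in R}a_j$. In the intended schedule every element client has total at most $7B$, so the order inside each group is arbitrary and no Johnson-type ordering or secondary scale is needed.

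The step you flag as delicate is that no element client lies in $Q\cap V$. The paper proves it with blockers of size only $4B$ and $2B$. Inclusion--exclusion gives $\sum_{j\in Q\cap V}a_j\ge 2B-\sum_{j\in P}a_j-\sum_{j\in R}a_j$, which contradicts the constraint of the client of $Q\cap V$ that is last on day~1. Then $Q\subseteq R$, and the two blocker constraints give $\sum_{j\in P}a_j\le B$ and $\sum_{j\in P}a_j\ge B$. Once such an asymmetric gadget is in place, your idea of large blockers with $K<2L$ would indeed make this step trivial.

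A smaller point: the reduction must run in polynomial time, with numbers of polynomial bit length. A merely pseudo-polynomial reduction from \textsc{Partition} would establish nothing, since \textsc{Partition} is itself pseudo-polynomially solvable.
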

\begin{proof}
Given an instance of \textsc{Partition}, we construct an instance for $1|rep|\max_i\sum_j{C_{ij}}$ with $n=|\mathcal{A}|+2$ clients and $q=3$ days as follows:
\begin{itemize}
    \item We create one client $x$ with $p_{1,x}=4B$, $p_{2,x}=3B$, and $p_{3,x}=0$.
    \item We create one client $y$ with $p_{1,y}=0$, and $p_{2,y}=p_{3,y}=2B$.
    \item For each $a_j\in \mathcal{A}$ we create one client $j$ with $p_{1,j}=p_{3,j}=a_j$ and $p_{2,j}=0$.
    \item We set $K=8B$ and ask whether there exists a schedule with $\max_i\sum_j{C_{ij}} \leq K$.
\end{itemize}
The reduction obviously requires only polynomial time.

\emph{Correctness.} We now show that the \textsc{Partition} instance is a yes-instance if and only if the constructed $1|rep|\max_i\sum_j{C_{ij}}$ instance is a yes-instance.

($\Rightarrow$): Assume that the given \textsc{Partition} instance is a yes-instance. Accordingly there exists a partition of $\mathcal{A}$ into $\mathcal{A}_1$ and $\mathcal{A}_2$ such that $\sum_{a_j\in \mathcal{A}_1}a_j=\sum_{a_j\in \mathcal{A}_2}a_j=B$. We then schedule the jobs as follows:
\begin{itemize}
    \item On day 1, we first schedule the jobs of clients $j$ with $a_j\in \mathcal{A}_1$ in an arbitrary order. Then we schedule the job of client $x$. Afterwards, we schedule the jobs of clients $j$ with $a_j\in \mathcal{A}_2$ in an arbitrary order.
    \item On day 2, we first schedule the job of client $x$, followed by the job of client $y$.
    \item On day 3, we first schedule the jobs of clients $j$ with $a_j\in \mathcal{A}_2$ in an arbitrary order. Then we schedule the job of client $y$. Finally, we schedule the jobs of clients $j$ with $a_j\in \mathcal{A}_1$ in an arbitrary order.
\end{itemize}

Note that we did not specify above how to schedule jobs with zero processing times, which are scheduled first on their respective days in arbitrary order.

Consider first the jobs of client $x$. On day 1, its job is completed at time $5B$, whereas on day 2 it is completed at time $3B$. Since the job of client $x$ on the last day has zero processing time, it follows that the total completion time of client $x$'s jobs is exactly $K = 8B$. By a similar argument, it is easy to verify that the total completion time of client $y$'s jobs over the three days is also exactly $8B$. Now, consider an arbitrary client~$j'$ with $a_{j'}\in A_1$. The completion time of this client's job is at most $B$ on the first day and at most $4B$ on the third day. Since the job of any client $j' \in[h]$ is completed at time zero on the second day, the total completion time of any client $j'$  with $a_{j'}\in A_1$ is at most $5B<K$. By a similar argument, the total completion time of any client $j$ with $a_j\in A_2$ is at most $7B<K$.

($\Leftarrow$): Assume that the constructed $1|rep|\max_i\sum_j{C_{ij}}$ instance is a yes-instance, i.e., there is a schedule $\pi$, with $\max_i\sum_j{C_{ij}}\leq K$. We begin by proving some properties regarding $\pi$ that will help us later to prove that we have a yes solution to the Partition instance as well.
\begin{lemma}
In schedule $\pi$, the job of client $x$ is scheduled before the job of client $y$ on day 2.
\end{lemma}
\begin{proof}
Assume toward a contradiction that in $\pi$ the job of client $y$ is scheduled before the job of client $x$ on day 2. Then the completion time of the client $x$ job on day 2 is $5B$. Combined with the processing time of $4B$ on day 1, the total completion time of client $x$ is at least $9B>K$, a contradiction.
\end{proof}
%First, we show that in $\pi$, the job of client $x$ is scheduled before the job of client $y$ on day 2. Otherwise, the completion time of client $x$'s job on day 2 would be $5B$. Together with the fact that client $x$'s job requires $4B$ processing time on day 1, this yields a total completion time of at least $9B>K$ for client $x$, a contradiction.
Now, for schedule $\pi$, let $P$ denote the set of clients whose jobs are scheduled before the job of client $x$ on day 1, and let $Q$ denote the set of clients whose jobs are scheduled after the job of client $x$ on day 1.
Moreover, let $R$ denote the set of clients whose jobs are scheduled before the job of client $y$ on day 3, and let $V$ denote the set of clients whose jobs are scheduled after the job of client $y$ on day 3.

%The total completion time of client $x$ jobs is
%$$C_x=\sum_{i=1}^3 C_{i,x}=\sum_{j \in P} p_{ij}+p_{1x}+p_{2x}=\sum_{j \in P} a_{j}+4B+3B.$$
%The fact that $C_x\leq K=8B$ then leads to, 
%$$\sum_{j \in P} a_{j}\leq B.$$
%The total completion time of client $y$ jobs is
%$$C_y=\sum_{i=1}^3 C_{i,y}=5B+\sum_{j \in R} a_{j}+2B.$$
%The fact that $C_y\leq K=8B$ then leads to, 
%$$\sum_{j \in R} a_{j}\leq B.$$
For each client $j \in Q$, define $\Pi(j)$ as the set of clients in $Q$ that are scheduled before job $(1,j)$ (including job $(1,j)$ itself). Similarly, for each client $j \in V$, define $\Sigma(j)$ as the set of clients in $V$ that are scheduled before the job $(3,j)$ (including job $(3,j)$ itself). 
\begin{lemma}
$W=Q \cap V=\emptyset$.
\end{lemma}
\begin{proof}
By contradiction, assume that $W\neq \emptyset$. Then, for any $w\in W$, we have
$$C_w=C_{1w}+C_{3w}=\sum_{j \in P} p_{1j}+p_{1x}+\Pi(w)+p_{3y}+\sum_{j \in R} p_{3j}+\Sigma(w)=$$
$$\sum_{j \in P} a_{j}+4B+\Pi(w)+\sum_{j \in R} a_{j}+2B+\Sigma(w)\leq K=8B,$$
which leads to the following inequality:
\begin{equation}
\label{first1}
\Pi(w)+\Sigma(w)\leq 2B-\sum_{j \in P} a_{j}-\sum_{j \in R}a_{j}.
\end{equation}
Using the inclusion-exclusion relation among sets, the following holds as well:
\begin{align}
\label{second2}
\sum_{j\in W} a_j & = \sum_{j\in Q\cap V} a_j=\sum_{j\in Q} a_j+\sum_{j\in V} a_j-\sum_{j\in Q\cup V} a_j \ge \nonumber \\
&  \big(2B-\sum_{j \in P} a_{j} \big)+\big(2B-\sum_{j \in R} a_{j}\}-2B=2B-\sum_{j \in P} a_{j}-\sum_{j \in R} a_{j},
\end{align}
%$$\sum_{j\in W} a_j=\sum_{j\in Q\cap V} a_j=\sum_{j\in Q} a_j+\sum_{j\in V} a_j-\sum_{j\in Q\cup V} a_j \ge \big(2B-\sum_{j \in P} a_{j} \big)+\big(2B-\sum_{j \in R} a_{j}\}-2B=2B-\sum_{j \in P} a_{j}-\sum_{j \in R} a_{j},$$
where the inequality follows from the fact that $Q\cup V \subseteq [h]$ (recall that $h$ is the number of elements in the Partition instance).
Now, let $w^* \in W$ be the element in $W$ scheduled last on day 1. Consequently, any other element in $W$ is scheduled before $w^*$ on day 1, and we have $\Pi(w^*)\ge \sum_{j \in W} a_{j}$. Plugging this into~(\ref{first1}) and using the fact that $\Sigma(w^*)\ge a_{w^*}\ge 1$,
\begin{equation}
    \sum_{j \in W} a_{j} \leq \Pi(w^*) \leq 2B-\sum_{j \in P} a_{j}-\sum_{j \in R}a_{j}-1<2B-\sum_{j \in P} a_{j}-\sum_{j \in R}a_{j},
\end{equation}
which yields a contradiction with~(\ref{second2}).
\end{proof} 

The fact that $W=Q \cap V=\emptyset$, further implies that $Q=R$ and $P=V$, i.e., any job $(1,j)$ that is scheduled before the job of client $x$ on day $1$ is scheduled after the job of client $y$ on day 3. Therefore, in $\pi$:
\begin{itemize}
\item The total completion time of the client $x$ jobs is
$$C_x=C_{1x}+C_{2x}=\sum_{j\in P} a_j+4B+3B=7B+\sum_{j\in P} a_j\leq K=8B,$$
and therefore
\begin{equation}
\label{par11}
\sum_{j\in P} a_j\le B.
\end{equation}
\item The total completion time of the client $y$ jobs is
$$C_y=C_{2y}+C_{3y}=5B+\sum_{j\in Q} a_j+2B=7B+\sum_{j\in Q} a_j =9B-\sum_{j\in P} a_j\leq K=8B,$$
and therefore 
\begin{equation}
\label{par21}
\sum_{j\in P} a_j\ge B.
\end{equation}
\end{itemize}
It follows from eqs.(\ref{par11})-(\ref{par21}) that $\sum_{j\in P} a_j = \sum_{j\in Q} a_j=B$. Defining $\mathcal{A}_1=\{a_j:j\in P\}$ and $\mathcal{A}_2=\{a_j:j\in Q\}$, we obtain a yes solution to the Partition instance.
%Consider day 2. For all clients $j$ whose jobs are scheduled earlier than the job of client $x$ on day 2, we put $a_j$ into $A_1$. For all remaining clients~$j$, we put $a_j$ into $A_2$. Assume for contradiction that $\sum_{a_j\in A_1}a_j\neq\sum_{a_j\in A_2}a_j$. Consider first the case $\sum_{a_j\in A_1}a_j>\sum_{a_j\in A_2}a_j$. Then $C_x(\mathcal{S})=7B+\sum_{a\in A_1}a>8B=K$, a contradiction. Now consider the case $\sum_{a_j\in A_1}a_j<\sum_{a_j\in A_2}a_j$. Assume for contradiction that all clients~$j$ with $a_j\in A_2$ have their job scheduled earlier than client~$y$ on day 4. Then by a similar argument as above we get that $C_y(\mathcal{S})>8B=K$, a contradiction. Hence we have that there is a client $j$ that has its job scheduled after client $x$ on day 2 and after client $y$ on day 4. However, then we have $C_j(\mathcal{S})>10B>K$, a contradiction.
\end{proof}
Note that $T_{ij}=C_{ij}$ when $d_{ij}=0$. Accordingly, the hardness result presented in this section carries to the  $1|rep|\max_i\sum_j{T_{ij}}$ with $q=3$. Nevertheless, we will show in the next section that this problem is NP-hard even when $q=2$.

\section{Hardness of $1|rep|\max_i\sum_j{W_{ij}}$ on two days}
In this section, we prove that $1|rep|\max_i\sum_j{W_{ij}}$ is NP-hard when $q=2$. Here as well, the proof is based on a polynomial-time reduction from the classical NP-hard Partition problem (see Definition~\ref{def:par}).
\begin{theorem}
\label{T2}
The $1|rep,q=2|\max_i\sum_j{W_{ij}}$ problem is weakly NP-hard, even when $p_{ij}=p_j$ for every $(i,j) \in \mathcal J$. 
\end{theorem}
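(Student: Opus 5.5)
We reduce from \textsc{Partition} (Definition~\ref{def:par}), as in the proof of Theorem~\ref{thm:fewdaysNPh}. Here, though, every client has the same processing time on both days. This is consistent with $p_{ij}=p_j$, so the construction proves the "even when" part at no extra cost. We then have $W_{ij}=\sum_{\ell \text{ before } j \text{ on day } i}p_\ell$.

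\emph{Construction.} For each $a_j\in\mathcal A$ we create an element client $j$ with $p_j=a_j$ on both days. We add two blocker clients $x$ and $y$ with $p_x=p_y=M:=4B$. We set $K=M+3B=7B$ and ask whether some schedule has $\max_j\sum_i W_{ij}\le K$.

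\emph{($\Rightarrow$) direction.} Given a partition $\mathcal A_1,\mathcal A_2$, let $P$ and $Q$ be the corresponding sets of element clients. We use the order $P,x,Q,y$ on day~1 and $Q,y,P,x$ on day~2.
\begin{itemize}
\item Client $x$ waits $B$ on day~1 and $2B+M$ on day~2, a total of $M+3B=K$.
\item Symmetrically, $y$ waits $B+M+B$ on day~1 and $B$ on day~2, again $K$.
\item An element of $P$ waits at most $B$ on day~1 and at most $B+M+B$ on day~2.
\item An element of $Q$ waits at most $B+M+B$ on day~1 and at most $B$ on day~2.
\end{itemize}
So every element client is within $K$.

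\emph{($\Leftarrow$) direction.} I would derive the following structural facts in order, all by simple counting against $K=7B<2M$.
\begin{enumerate}
\item The relative order of $x$ and $y$ must differ on the two days. Otherwise the client that is second on both days waits at least $2M>K$. By symmetry, take $x$ before $y$ on day~1 and $y$ before $x$ on day~2.
\item No element client can have a blocker ahead of it on both days, or two blockers ahead on one day, since it would then wait at least $2M>K$. Hence every element client is either ahead of both blockers on some day, or between the blockers on one day and ahead of both on the other.
\item Let $P$ and $Q$ be the element clients before $x$ and between $x$ and $y$ on day~1, and let $R$ be those before $y$ on day~2. By the previous fact, $P\cup Q=[h]$ and $Q\subseteq R$.
\item Now bound the blockers. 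On day~2, $x$ waits $M$ plus all element clients placed before it there, which is everything, so $2B$. Thus $x$ waits $\sum_{P}a_j+M+2B\le K$, which forces $\sum_P a_j\le B$. Symmetrically, $y$ waits $M+2B$ on day~1 plus $\sum_R a_j$ on day~2, which forces $\sum_R a_j\le B$, and hence $\sum_Q a_j\le B$.
\item Since $\sum_P a_j+\sum_Q a_j=2B$, both sums equal $B$, and $P$ gives the required partition.
\end{enumerate}

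The delicate step is the bookkeeping in item~4: on day~2, client $x$ is preceded by all element clients, including those placed between $y$ and $x$. I would also double-check the case analysis in item~2, which must exclude an element client placed after both blockers on one day even though it is before both on the other day. That exclusion again uses only $2M>K$, so I expect it to go through.
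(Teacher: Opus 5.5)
Your proof is correct and follows essentially the same route as the paper. Both use two identical blockers, the schedules $P,x,Q,y$ and $Q,y,P,x$, the same structural facts (opposite blocker order, no element behind both blockers, no element behind one blocker on both days) and the same two budget inequalities for $x$ and $y$. Your constants $M=4B$, $K=7B$ differ from the paper's $5B$, $8B$, but both satisfy the only condition that matters, $M>3B$, i.e., $2M>K=M+3B$.

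One point is in your favour. The paper asserts $\beta_1=\alpha_2$ directly from its two lemmas, which does not yet follow, since an element may precede both blockers on both days. Your inclusion $Q\subseteq R$, combined with $\sum_{Q}a_j\le\sum_{R}a_j$, is exactly the statement that is justified and suffices.
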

\begin{proof}
Given an instance of \textsc{Partition}, we construct an instance of the $1|rep|\max_i\sum_j{W_{ij}}$ problem with $n=|A|+2$ clients and $q=2$ days as follows:
\begin{itemize}
    \item We create two clients $x$ and $y$ with processing time $5B$ on both days, i.e., $p_{i,x}=p_{i,y}=5B$ for $i\in\{1,2\}$. We call the jobs of these clients 'big' jobs. 
    \item For each $a_j\in A$ we create one client $j$ with $p_{1,j}=p_{2,j}=a_j$. We call the jobs of these clients 'small' jobs. 
    \item We set $K=8B$, and ask whether there exists a schedule with $\max_i\sum_j{W_{ij}} \leq K$.
\end{itemize}
The reduction obviously requires only polynomial time, and satisfies the condition that $p_{ij}=p_j$ for every $(i,j) \in \mathcal J$.

\emph{Correctness.} We now show that the \textsc{Partition} instance is a yes-instance if and only if the constructed $1|rep|\max_i\sum_j{W_{ij}}$ instance is a yes-instance.

($\Rightarrow$): Assume that the given \textsc{Partition} instance is a yes-instance. Accordingly there exists a partition of $\mathcal{A}$ into $\mathcal{A}_1$ and $\mathcal{A}_2$ such that $\sum_{a_j\in \mathcal{A}_1}a_j=\sum_{a_j\in \mathcal{A}_2}a_j=B$. We then schedule the jobs as follows:
\begin{itemize}
    \item On day 1, we first schedule the job of clients $j$ with $a_j\in \mathcal{A}_1$ in an arbitrary order. Then we schedule the job of client $x$. Afterwards, we schedule the jobs of clients $j$ with $a_j\in A_2$ in an arbitrary order. Finally, we schedule the job of client $y$.
    \item On day 2, we begin by scheduling the jobs of clients $j$ with $a_j\in \mathcal{A}_2$ in an arbitrary order. Then we schedule the job of client $y$. Afterwards, we schedule the jobs of clients $j$ with $a_j\in \mathcal{A}_1$ in an arbitrary order. Lastly, we schedule the job of client $x$.
\end{itemize}

Consider first the jobs of client $x$. On day 1, its job waits until time $B$ before its processing starts, whereas on day 2 it waits until time $7B$. Similarly, the waiting times of the jobs of client $y$ are $7B$ and $B$ on day 1 and day 2, respectively. Therefore, the total waiting time of either one of the two clients $x$ and $y$ is exactly $K = 8B$. Now, consider an arbitrary client~$j'$ with $a_{j'}\in \mathcal{A}_1$. The waiting time of this client's job is at most $B$ on the first day and at most $7B$ on the second day. Therefore, the total waiting time of any client $j'$  with $a_{j'}\in \mathcal{A}_1$ is at most $K=8B$. Using a similar argument, the total waiting time of any client $j$ with $a_j\in \mathcal{A}_2$ is at most $K=8B$.

($\Leftarrow$): Assume that the constructed $1|rep|\max_i\sum_j{W_{ij}}$ instance is a yes-instance, i.e., there is a schedule $\pi$, with $\max_i\sum_j{W_{ij}}\leq K$. We begin by proving some properties regarding $\pi$.
\begin{lemma}
\label{first}
In $\pi$ the job of client $x$ is scheduled before the job of client $y$ on exactly one of the two days.
\end{lemma}
\begin{proof} 
By contradiction, assume that in $\pi$ the job of client $x$ is scheduled either before (case $i$) or after (case $ii$) the job of client $y$ on both days. In the first case, the job of client $y$ waits at least $5B$ time units on each of the two days, so its total waiting time is at least $10B>K$, a contradiction. The second case is symmetric: the total waiting time of client $x$ is at least $10B>K$, again a contradiction.
\end{proof}

\begin{lemma}
\label{second}
In $\pi$ neither of the small jobs is scheduled after both of the big jobs on either of the days.
\end{lemma}
\begin{proof} 
By contradiction, assume that in $\pi$ there exists an arbitrary small job $j$ that is scheduled after the two big jobs in one of the days. As the processing time of a big job is $5B$, job $j$ will wait at least $10B$ time units on that day, a contradiction.
\end{proof}

By Lemma~\ref{first}, we may assume, without loss of generality, that in $\pi$ the job of client $x$ is scheduled before the job of client $y$ on the first day, and after it on the second day. Now, let $\alpha_1$ be the set of small clients whose jobs are scheduled before the job of client $x$ on day 1, and $\alpha_2$ be the set of small clients whose jobs are scheduled after the job of client $x$ and before the job of client $y$ on day 1. In a similar fashion, let $\beta_1$ be the set of small clients whose jobs are scheduled before the job of client $y$ on day 2, and $\beta_2$ be the set of small clients whose jobs are scheduled after the job of client $y$ and before the job of client $x$ on day 2. Note that by Lemma~\ref{second}, $\alpha_1\cup\alpha_2=\beta_1\cup\beta_2=[h]$.
\begin{lemma}
\label{third}
$\alpha_2\cap\beta_2=\emptyset$.
\end{lemma}
\begin{proof} 
By contradiction, assume that in $\pi$ $\alpha_2\cap\beta_2 \neq \emptyset$, and consider an arbitrary client $j \in W=\alpha_2\cap\beta_2$. As there is a single big job scheduled before the job of client $j$ on either one of the two days, we have $W_{1j}+W_{2j}\ge 10B>K$, a contradiction.
\end{proof}
It follows from Lemmas~\ref{second} and \ref{third} that $\beta_2=\alpha_1$ and that $\beta_1=\alpha_2$. Accordingly,
\begin{itemize}
\item The total waiting time of the client $x$ jobs in $\pi$ is
$$W_{1x}+W_{2x}=\sum_{j\in \alpha_1} a_j+2B+5B=7B+\sum_{j\in \alpha_1} a_j\leq K=8B,$$
and therefore
\begin{equation}
\label{par1}
\sum_{j\in \alpha_1} a_j\le B.
\end{equation}
\item The total waiting time of the client $y$ jobs in $\pi$ is
$$W_{1y}+W_{2y}=2B+5B+\sum_{j\in \alpha_2} a_j=7B+\sum_{j\in \alpha_2} a_j =9B-\sum_{j\in \alpha_1} a_j\leq K=8B,$$
and therefore 
\begin{equation}
\label{par2}
\sum_{j\in \alpha_1} a_j\ge B.
\end{equation}
\end{itemize}
It follows from eqs.(\ref{par1})-(\ref{par2}) that $\sum_{j\in \alpha_1} a_j = \sum_{j\in \alpha_2} a_j=B$. Defining $\mathcal{A}_1=\{a_j:j\in \alpha_1\}$ and $\mathcal{A}_2=\{a_j:j\in \alpha_2\}$, we obtain a yes solution for the Partition instance.
\end{proof}
Note that when $d_{ij}=p_{ij}$ the relation that $T_{ij}=W_{ij}$ holds. Therefore, and based on Theorem~\ref{T2}, the following corollary holds, providing a negative answer to Question 3 as well.
\begin{corollary}
    The $1|rep,q=2|\max_i\sum_j{T_{ij}}$ is weakly NP-hard, even when $p_{ij}=p_j$ for every $(i,j) \in \mathcal J$.
\end{corollary}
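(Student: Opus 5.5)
The corollary follows from Theorem~\ref{T2} by a trivial polynomial-time reduction that only adds due dates. Take any instance of $1|rep,q=2|\max_j\sum_i W_{ij}$ with $p_{ij}=p_j$, for instance the one built from \textsc{Partition} in the proof of Theorem~\ref{T2}. Build a tardiness instance with the same clients, the same two days and the same processing times, and set $d_{ij}=p_{ij}$ for every $(i,j)\in\mathcal J$. Keep the same threshold $K=8B$. This transformation is clearly polynomial, and it preserves the restriction $p_{ij}=p_j$.

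The key step is to show that the two instances have the same value for every solution $\pi=(\pi_1,\pi_2)$. For any job, $C_{ij}\ge p_{ij}$, since a job completes no earlier than its own processing time. Hence $L_{ij}=C_{ij}-d_{ij}=C_{ij}-p_{ij}=W_{ij}\ge 0$, and so
$$T_{ij}=\max\{0,L_{ij}\}=W_{ij}.$$
Summing over the days gives $\sum_i T_{ij}=\sum_i W_{ij}$ for every client $j$. Therefore $\max_j\sum_i T_{ij}\le K$ holds exactly when $\max_j\sum_i W_{ij}\le K$ holds, and the feasible sets of the decision versions coincide.

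Composing with the reduction from \textsc{Partition} in Theorem~\ref{T2} shows that the tardiness problem is NP-hard already for $q=2$ with $p_{ij}=p_j$. The hardness is weak because \textsc{Partition} is only weakly NP-hard and the numbers produced are polynomial in $B$.

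There is no real obstacle here. The only point needing care is to confirm that the nonnegativity $C_{ij}\ge p_{ij}$ holds with $r_{ij}=0$ and nonpreemptive schedules, so that the $\max\{0,\cdot\}$ in the definition of tardiness never cuts anything off. This is exactly what fails for the lateness-based reduction of \cite{hermelin2025fairness}, where some jobs have negative lateness.
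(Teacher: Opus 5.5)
Your proposal is correct and matches the paper's argument: setting $d_{ij}=p_{ij}$ gives $T_{ij}=W_{ij}$ for every job, so Theorem~\ref{T2} carries over to tardiness and the restriction $p_{ij}=p_j$ is preserved. Your explicit check that $C_{ij}\ge p_{ij}$, so the $\max\{0,\cdot\}$ never cuts anything off, is a detail the paper leaves implicit.
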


\section{Summary and Open Questions}
In this paper, we address three open questions regarding the complexity status of the maximal fairness problem in repetitive scheduling with respect to the number of periods (days). Our results provide a clear characterization of how the number of days affects the complexity of the fairness problem when considering the completion time, waiting time and tardiness criteria.

More specifically, when the quality of service of each client is measured by the total completion time of their jobs, the problem is polynomial-time solvable when the number of days is at most two, NP-hard when the number of days is fixed and greater than two, and strongly NP-hard for an arbitrary number of days. When the quality of service is instead measured using either the waiting time or the tardiness criterion, similar complexity results hold, with the difference that the problem becomes NP-hard already for two days.  

Several important questions regarding the problems that we consider remain open. In the following, we list those that we consider the most fundamental.

\begin{itemize}
    \item The $1|rep|\max_j\sum_i f_{ij}(C_{ij})$ problem is strongly NP-hard when $f_{ij}(C_{ij})\in\{C_{ij},W_{ij},L_{ij},T_{ij}\}$. When the number of days $q$ is fixed, the problem is known to be weakly NP-hard for any constant $q \ge 3$ under the completion-time criterion, and for any constant $q \ge 2$ under the waiting-time, lateness, and tardiness criteria. However, whether these fixed-$q$ cases are solvable in pseudo-polynomial time remains an open question.
    \item The special case in which the processing time of each job is independent of the day ($p_{ij}=p_j$ for every $(i,j) \in \mathcal J$) has numerous practical applications. For example, each client may submit the same job in every period. One such application arises in healthcare, where a patient undergoes treatment using the same machine (e.g., a radiation therapy machine) during multiple visits, with the same treatment administered in each session. Despite its practical importance, the computational complexity of this special case remains unknown for the completion time and lateness criteria.
%    \item The unit-processing-time case is solvable in polynomial time for all three performance measures considered in this paper. However, the complexity of the unit-processing-time case remains unknown when the performance measure is tardiness, defined as the maximum of zero and the lateness of a job. 
\end{itemize}
Several promising directions for future research remain. These include analyzing the price of fairness, investigating alternative notions of fairness, and extending the model to more complex machine environments (initial steps toward this goal have been presented in~\cite{Megow}). Another important direction is the development of practical heuristic algorithms capable of efficiently solving the computationally hard variants. Finally, an especially promising avenue is the study of fairness in repetitive optimization problems arising in other domains, such as matching and vehicle routing. 

%In this paper, we begin by providing a polynomial time procedure to solve the decision version of the problem, i.e., to given $K$, we ask if there exists a $K$-fair solution with $L_j=\sum_{i\in[q]} L_{ij} \leq K$ in polynomial time. Then, using a binar we use a binary to   

\bibliographystyle{apalike-ejor}
\bibliography{bib}	 
\end{document}